\newcommand\R{{\mathbb R}}
\newcommand\N{{\mathbb N}}
\newcommand\eps{{\varepsilon}}
\newcommand\eqdef{:=}
\newcommand\matousek{{Matou{\v{s}}ek}}
\DeclareMathOperator{\lev}{lev}
\newtheorem {theorem} {Theorem}[section]
\newtheorem {prop}[theorem] {Proposition}
\newtheorem {lemma}[theorem] {Lemma}
\title{A Lower Bound for Shallow Partitions}
\author{Wolfgang Mulzer
         \and
        Daniel Werner\thanks{DW was funded by 
	 Deutsche Forschungsgemeinschaft within the Research Training
         Group (Graduiertenkolleg) ``Methods for Discrete Structures''.}
	}
\institution{\texttt{\{mulzer,werner\}@inf.fu-berlin.de}\\\\
        Institut f\"ur Informatik,\\
        Freie Universit\"at Berlin,\\ 14195 Berlin, Germany}
\begin{document}
\maketitle

\begin{abstract}
Let $P$ be a planar $n$-point set. A \emph{$k$-partition} of
$P$ is a subdivision of $P$ into $\lceil n/k\rceil$ parts of
roughly equal size and a sequence of triangles such that
each part is contained in a triangle. A line is \emph{$k$-shallow} 
if it has at most $k$ points of $P$ below it.
The crossing number of a $k$-partition is the maximum number
of triangles in the partition that any $k$-shallow line intersects.
We give a lower bound of $\Omega(\log (n/k)/\log\log(n/k))$ 
for this crossing number, answering a 20-year old question
of \matousek.
\end{abstract}

\section{Introduction}

\emph{Range searching} is a fundamental problem in computational 
geometry that has long driven innovation in the field~\cite{AgarwalEr99}:
given a set of $n$ points in $d$ dimensions, find a data
structure such that all points inside a given query range
can be found efficiently. Depending on the precise
nature of the query range and on the dimension,
many different versions of the problem can be studied. Consequently, 
a wide variety of techniques have been developed to address them. Among these
tools we can find such classics as range trees and
$k$d-trees~\cite[Chapter~5]{deBergChvKrOv08}, 
$\eps$-nets and cuttings~\cite{Chazelle00}, spanning trees with small 
crossing number~\cite{Welzl92}, 
geometric partitions~\cite{Matousek92b}, and many more. For several problems,
almost matching lower bounds are known (in certain models of 
computation)~\cite{Chazelle00}.

\emph{Geometric partitions} provide the most effective means for solving the
\emph{simplex range searching} problem, where the query range is
given by a $d$-dimensional simplex~\cite{Chan10,Matousek92b}.
They provide a way to subdivide a point set into 
parts of roughly equal size, such that (i) each part is contained in a simplex;
and (ii) any given hyperplane intersects only few of these simplices. 
This makes it possible to construct a tree-like data structure 
in which each node corresponds to a simplex in an appropriate geometric partition.
With a careful implementation, one can achieve query time
$O(n^{1-1/d} + z)$ with linear space~\cite{Chan10} (here $z$ is
the output size, i.e., the number of reported points).

If the query simplex degenerates to a half-space, we can
do better~\cite{Matousek92}. For this, we need a more specialized version of
geometric partitions, called \emph{shallow partitions}. Again, these
partitions provide a way for subdividing a $d$-dimensional point set into
parts of roughly equal size, such that each part is contained in a simplex and
such that a hyperplane intersects only few of these simplices. 
This time, however, we restrict ourselves to \emph{shallow} hyperplanes. 
Such hyperplanes have only few points to one side. 
Thus, we only have the guarantee
that any shallow hyperplane will intersect few simplices of the partition
(see below for details).
This makes it possible to decrease the number of simplices that are intersected 
and to achieve better bounds for halfspace range searching. Namely, one
can obtain for $d \geq 4$ a linear-space data structure that answers a query in time
$O(n^{1-1/\lfloor d/2 \rfloor} + z)$, where $z$ is the output 
size~\cite{Chan10} (for $d=2,3$, one can achieve query time $O(\log n + z)$
and linear space~\cite{AfshaniCh09}).

Shallow partitions (as well as their cousins---shallow cuttings) have proved
invaluable tools in computational geometry and have found numerous further
applications. Nonetheless, there still remain some open questions. 
As mentioned above, we would like every shallow hyperplane 
to intersect as few simplices of the shallow partition as possible.
But what exactly is possible? For dimension $d \geq 4$, the original bound
by \matousek~\cite{Matousek92b} is known to be asymptotically tight. For
lower dimensions, however, \matousek\ asked whether his result could be 
improved.
It took almost 20 years until Afshani and Chan~\cite{AfshaniCh09}
provided the first lower bound in three dimensions, almost
matching the upper bound. For the plane, however, so far no nontrivial
lower bounds appear in the literature. 

Here, we will give a construction that
provides such a lower bound for shallow partitions in two dimensions.
Our result almost matches the upper bound and also gives an alternative
proof for the bound of Afshani and Chan~\cite{AfshaniCh09}.
A similar construction has been discovered independently by
Afshani~\cite{Afshani10}.

\section{Shallow partitions}

We begin by providing the details of \matousek's shallow partition theorem
in two dimensions.
Let $P \subseteq \R^2$ be a planar $n$-point set in general position.
Let $k \in \{1, \ldots, n\}$ be a parameter. 
A \emph{$k$-partition} $\mathcal{P}$ for $P$ consists of two parts:
(i) a sequence $P_1$, $P_2$, $\ldots$, $P_{\lceil n/k\rceil}$ of
pairwise disjoint subsets of $P$ such that $\bigcup_i P_i = P$
and $|P_i| = k$ for $i = 1, \ldots, \lfloor n/k \rfloor$; and
(ii) a sequence $\Delta_1$, $\Delta_2$, $\ldots$, $\Delta_{\lceil n/k\rceil}$
of triangles such that $P_i \subseteq \Delta_i$ for all $i$.

Now let  $\ell$  be a line that does not contain any point in $P$, and let
$\ell^-$ denote the open halfplane below $\ell$. 
We say that $\ell$ is \emph{$k$-shallow} if 
$|\ell^- \cap P| \leq k$. Given a $k$-partition $\mathcal{P}$
of $P$, the \emph{crossing number} of $\mathcal{P}$ is the 
maximum number of triangles in $\mathcal{P}$ that are intersected
by any $k$-shallow line. For any given $k$, the goal is to find 
a $k$-partition of $P$ whose crossing number is as small as possible.
\matousek~\cite[Theorem~3.1]{Matousek92} 
proved the following theorem.
\begin{theorem} 
\label{thm:shallowpart}
Let $P$ be a planar $n$-point set in general position and let 
$k \in \{1, \ldots,
n\}$. Then there exists a $k$-partition of $P$ with crossing number
$O(\log(n/k))$. \hfill$\Box$
\end{theorem}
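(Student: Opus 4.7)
I would follow \matousek's strategy of iterative reweighting combined with a shallow cutting, building the partition one class at a time.

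First, fix a finite test set $L$ of $k$-shallow lines that represents all $k$-shallow lines: by a standard $\eps$-approximation argument for the range space of shallow halfplanes (together with duality), one may take $|L|$ polynomial in $n/k$ so that, under any nonnegative weighting of $L$, the weighted crossing number of a triangle controls, up to constants, the worst-case crossing number over all $k$-shallow lines. Initialize every line of $L$ with unit weight.

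Now iteratively build the parts $P_1, \ldots, P_{\lceil n/k \rceil}$. At step $i$, let $P^{(i)}$ be the set of points not yet assigned, let $m_i := |P^{(i)}|/k$ be the number of classes still to form, and let $W_i$ denote the current total weight on $L$. The key existence lemma is: there is a triangle $\Delta_i$ containing exactly $k$ points of $P^{(i)}$ whose total weighted crossing count is at most $c\,W_i/m_i$ for an absolute constant $c$. Set $P_i := P^{(i)} \cap \Delta_i$, remove these points, and double the weight of every line of $L$ crossing $\Delta_i$.

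The existence of $\Delta_i$ is the main obstacle and the place where the shallow setting becomes essential. My plan is to produce, from the weighted dual arrangement of $L$, a collection of $O(m_i)$ triangles that cover an appropriate low-level region and whose total weighted crossing count is $O(W_i)$ --- a linear, rather than the usual $\sqrt{m_i}$, bound reflecting the hallmark property of shallow cuttings that each shallow line meets only $O(1)$ cells. These cells then have to be balanced so that each contains exactly $k$ points of $P^{(i)}$ (splitting overfull cells and merging sparse ones), after which an averaging argument isolates a $\Delta_i$ with weighted crossing count at most the claimed average.

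A standard reweighting analysis then closes the argument. The update yields $W_{i+1} \leq W_i(1 + c/m_i)$, so $\log_2(W_{\mathrm{final}}/W_0) \leq c' \sum_i 1/m_i = O(\log(n/k))$. Since the weight of any line of $L$ at most doubles with each crossing and never exceeds the total weight, every line of $L$ is crossed by $O(\log(n/k))$ triangles of the partition. By the representativeness of $L$, the same bound transfers to every $k$-shallow line, giving the desired crossing number.
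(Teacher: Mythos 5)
The paper does not actually prove Theorem~\ref{thm:shallowpart}: it cites \matousek's original theorem and merely remarks that his proof ``uses cuttings and a variant of the iterative reweighting technique.'' Your sketch follows exactly that route (a weighted test set of $k$-shallow lines, a shallow-cutting existence lemma giving a cheap triangle at each step, weight doubling, and the standard $\sum 1/m_i = O(\log(n/k))$ potential bound), so at a high level you are reproducing the cited approach rather than offering a different one. The real content, as you correctly flag, sits in the ``key existence lemma,'' and two points there need repair: the shallow cutting should be computed directly in the arrangement of the weighted test lines $L$ (your phrase ``weighted dual arrangement of $L$'' suggests an extra, unwanted dualization --- $L$ already lives in the primal plane where the triangles do), and ``merging sparse cells'' does not in general yield triangles, so the standard fix is to allow part sizes in $[k,2k]$ (or to group leftover shallow points into $O(1)$ extra triangles per round) rather than merge cells geometrically. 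With those adjustments the outline is a faithful summary of \matousek's argument, but the weighted shallow-cutting lemma itself --- $O(m_i)$ cells with total weighted crossing $O(W_i)$, each meeting $O(k)$ points --- is the theorem's actual work and would still need to be proved.
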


\matousek's original proof uses cuttings and a variant of the 
iterative reweighting technique (also known as the multiplicative weights 
update method~\cite{AroraHaKa10}), and it readily generalizes to higher 
dimensions. 
More recently, Har-Peled and Sharir~\cite[Lemma~3.3]{HarPeledSh11} give
an approach for proving Theorem~\ref{thm:shallowpart} with elementary 
means, but it is not clear whether their technique can be applied to 
higher dimensions. As mentioned in the introduction, 
\matousek~\cite{Matousek92} 
asked whether
the crossing number in Theorem~\ref{thm:shallowpart} can be
improved to $O(1)$. He conjectured that the answer is no. 
Afshani and Chan~\cite{AfshaniCh09} proved that for any $k$ there are
arbitrarily large point sets
in $\R^3$ such that the crossing number
of any $k$-partition for them is 
$\Omega\bigl(\frac{\log (n/k)}{\log\log (n/k)}\bigr)$. 
However, their construction
does not apply for two dimensions. 
Hence, we will describe here a different---and arguably simpler---construction
that yields the same lower bound for the plane.
Independently, Afshani~\cite{Afshani10} used very similar ideas to
obtain the same lower bound.

\section{The Lower Bound}

Let $a(n,k)$ be the minimum crossing number that
a $k$-partition can achieve for any planar $n$-point set
in general position. 
For the lower bound, we shall consider the dual setting. 
We use the standard duality 
transform along the unit paraboloid that maps the point 
$p:(p_x,p_y)$ to the line $p^*: y = 2p_x x -p_y$ and vice
versa~\cite{Mulmuley94}.

A point set $P$ dualizes to a set $P^*$ of planar lines. We now
define the \emph{$k$-level} of $P^*$, $\lev_k(P^*)$~\cite{SharirAg95}. 
It is the closure of the set 
of all points that lie on a line of $P^*$ and 
that have exactly $k$ lines of $P^*$ beneath them. 
We observe that $\lev_k(P^*)$ is an $x$-monotone polygonal curve 
whose edges and vertices come from the arrangement
of $P^*$. 
Let $\mathcal{C}$ be the upper convex hull of $\lev_k(P^*)$.
For each vertex $v$ of $\mathcal{C}$, we let $P^*_{v} \subseteq P^*$ 
denote the set of lines
beneath it. We call $P^*_v$ the \emph{conflict set} of $v$. 
We have $|P^*_{v}| = k$,\footnote{Note 
that $\lev_k(P^*)$ may also contain vertices with
only $k-1$ lines of $P^*$ beneath them, but these vertices
cannot appear on $\mathcal{C}$, since they correspond to
a concave bend in $\lev_{k}(P^*)$.} 
hence $v$ is dual to
a $k$-shallow line $v^*$ in the primal plane.

Now we can interpret shallow partitions in the dual plane:
\begin{prop}\label{prop:dualpartition}
Let $\mathcal{C}$ be an $x$-monotone downward convex chain, and let
$L$ be a set of $n$ lines such that for each
vertex $v$ of $\mathcal{C}$ the conflict set $L_v$ has cardinality $k$.
Then there exists a coloring of $L$ 
such that (i) each color class 
has size at most $k$; and (ii) each conflict set $L_v$ contains
at most $a(n,k)+1$ different colors.
\end{prop}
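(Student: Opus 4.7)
The plan is to dualize the statement back to the primal and then apply the definition of $a(n,k)$. Let $P \subseteq \R^2$ be the set of $n$ points obtained by dualizing the lines in $L$ (applying a symbolic perturbation if necessary to put $P$ in general position). Each vertex $v$ of $\mathcal{C}$ dualizes to a line $v^*$ in the primal, and since the conflict set $L_v$ has cardinality exactly $k$, exactly $k$ points of $P$ lie below $v^*$; in other words, $v^*$ is $k$-shallow.

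By the definition of $a(n,k)$, there exists a $k$-partition $(P_1,\Delta_1),\ldots,(P_{\lceil n/k\rceil},\Delta_{\lceil n/k\rceil})$ of $P$ whose crossing number is at most $a(n,k)$. I would color each line $\ell \in L$ by the index $i$ such that $\ell^* \in P_i$. Condition~(i) is then immediate, since $|P_i|\leq k$ for every $i$.

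For condition~(ii), fix a vertex $v$ of $\mathcal{C}$. A color $i$ appears in $L_v$ precisely when some point of $P_i$ lies below $v^*$, which forces $\Delta_i$ either to be crossed by $v^*$ or to lie entirely below $v^*$. The number of triangles of the partition crossed by $v^*$ is at most $a(n,k)$, because $v^*$ is $k$-shallow. The key step is therefore to bound the number of triangles lying entirely below $v^*$: since a full-sized part $P_i$ contains $k$ points, any triangle $\Delta_i$ with $|P_i|=k$ lying entirely below $v^*$ already accounts for the entire shallowness budget of $v^*$, so there can be at most one such triangle. Adding these two counts yields the desired bound of $a(n,k)+1$ distinct colors in $L_v$.

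The main subtlety---though not a serious obstacle---is the possibility that the partition has one part of size strictly less than $k$: I would need to check that a full-sized triangle and the small-part triangle cannot both lie strictly below $v^*$ simultaneously. This follows from the same counting argument, since a full-sized triangle below $v^*$ already uses up all $k$ points available beneath $v^*$, leaving no point of any other triangle strictly below $v^*$.
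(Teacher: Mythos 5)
Your proposal is correct and takes essentially the same approach as the paper: dualize $L$ to a point set $P$, take a $k$-partition of crossing number $a(n,k)$, color by part, and observe that a color in $L_v$ comes from a triangle that either crosses the $k$-shallow line $v^*$ (at most $a(n,k)$ such) or lies wholly below it (at most one, by the counting argument). The paper compresses this into a single clause (``at most one triangle of $\mathcal{P}$ can be wholly contained in $v^{*-}$''); your version simply spells out the pigeonhole justification, including the case of the one undersized part.
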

\begin{proof}
Consider the primal plane, where $L = P^*$ corresponds to a point set $P$.
By assumption, there exists a $k$-partition $\mathcal{P}$ of $P$ 
with crossing number $a(n,k)$. Each vertex $v$ of $\mathcal{C}$ 
corresponds 
to a $k$-shallow line $v^*$, and at most one triangle of $\mathcal{P}$ 
can be wholly contained in $v^{*-}$. Thus, the claim follows from the 
properties of the duality transform.
\end{proof}

We are now ready to describe the construction. 
Let $m = 2^\beta$ be a power of 2 and 
let $\mathcal{C}$ be an $x$-monotone convex chain with $m$ vertices.
We denote these vertices by
$v_1$, $\ldots$, $v_m$, from left to right. 
Now, for $j = 0, \ldots, \beta$, let
$L_j$ be a set of $m/2^j$ lines such that
the first line in $L_j$ lies exactly below the vertices $v_1$ to $v_{2^j}$,
the second line lies below $v_{2^j+1}$ to $v_{2 \cdot 2^{j}}$,
the third line lies below $v_{2 \cdot 2^j+1}$ to $v_{3 \cdot 2^{j}}$,
etc. We set $L' \eqdef \bigcup_{j=0}^{\beta} L_j$. See Fig. \ref{fig:Ls}.
\begin{center}
\begin{figure}[htb] 
  \centering\subfigure{ 
    \includegraphics[scale=.7]{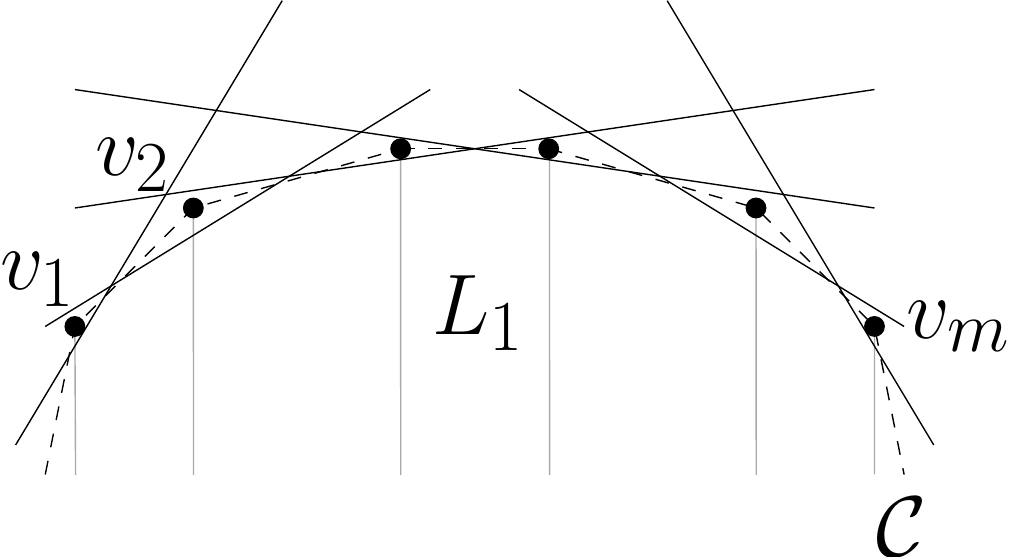}

   } 
  \centering\subfigure{ 
    \includegraphics[scale=.7]{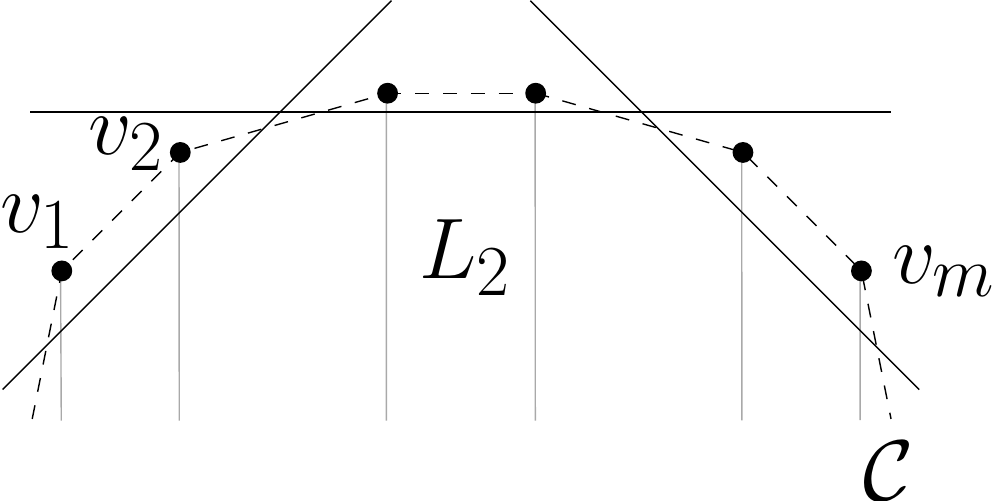}

   } 
     \centering\subfigure{ 
    \includegraphics[scale=.7]{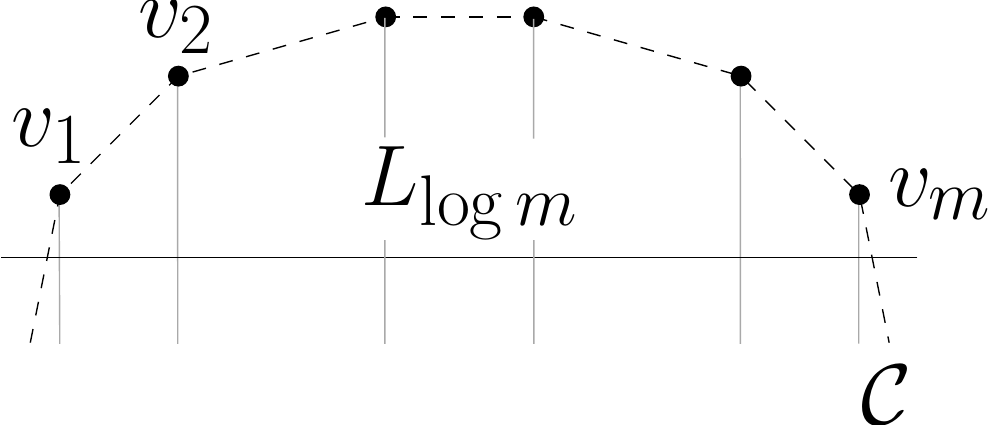}
   } 
     \caption{Sets of lines $L_j$.} 
       
     \label{fig:Ls} 
\end{figure}
\end{center}

Assume for now that $k$ is a multiple of $\beta + 1$,
and let $L$ consist of $k/(\beta + 1)$ copies of $L'$. We perturb the
lines in $L$ such that they are all distinct while
their relationship with the vertices of $\mathcal{C}$ remains unchanged.
It follows that $L$ has exactly $n \eqdef (2m-1)k/(\beta + 1)$ lines, with 
exactly $k$ lines in each conflict set $L_{v_i}$ (recall that by definition
$\beta = \log m$).

By Proposition~\ref{prop:dualpartition}, there is a coloring of 
$L$ such that each color class 
has size at most $k$ and such that each conflict set contains at most
$a(n,k)+1$ colors. The structure of $L$ lets us interpret this
coloring as follows:
let $T$ be a complete binary tree with 
$2m-1$ nodes and height $\beta$. We label the leaves of $T$ with 
the vertices $v_1$, $\ldots$, $v_m$, from left to right. 
Thus, every node $w$ of $T$ corresponds to an interval of
consecutive vertices of $\mathcal{C}$, namely the leaves
of the subtree rooted in $w$.
By assigning to $w$ the lines that lie exactly below the
vertices in this interval,
we  obtain a partition of $L$ into sets of
size $k/(\beta + 1)$. This leads to an 
interpretation of shallow partitions as multi-colorings
of trees, see Fig.~\ref{fig:tree}.
\begin{figure}[htb] 
  \begin{center}
   \includegraphics[scale=1]{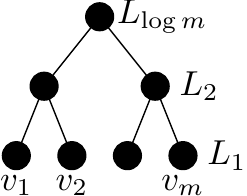}
  \end{center}
  \caption{A tree with height $\beta = \log n$. The leaves correspond to
  the vertices $v_i$, and the level of height $i$ corresponds to the lines
  in $L_i$. A line $\ell$ is stored in the node whose subtree corresponds to the
  vertices that have $\ell$ below them.}
  \label{fig:tree}
\end{figure}

\begin{prop}\label{prop:tree}
Let $T$ be a complete binary tree with height $\beta = \log m$ and $2m-1$ nodes,
and let $k$ be a multiple of $\log m + 1$.
Then there exists a multi-coloring of the nodes of $T$ with the
following properties: 
(i) every node is associated with a multiset of $k/(\beta + 1)$ colors;
(ii) each color class has at most $k$ elements; (iii) along each
root-leaf path there are at most $a(n,k)+1$ distinct colors,
where $n = (2m-1)k/(\beta + 1)$.
\end{prop}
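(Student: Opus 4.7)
The plan is to apply Proposition~\ref{prop:dualpartition} directly to the line set $L$ constructed in the paragraph preceding the statement, and then reinterpret the resulting line-coloring as a node-coloring of $T$. The correspondence between lines of $L$ and nodes of $T$ has already been spelled out: each node $w$ at height $j$ owns the $k/(\beta+1)$ copies of the line in $L_j$ that lies beneath the leaves of its subtree, and this gives a partition of $L$ into $2m-1$ groups of size $k/(\beta+1)$.

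First I would verify that the hypotheses of Proposition~\ref{prop:dualpartition} hold. By construction $|L| = n = (2m-1)k/(\beta+1)$, and for each vertex $v_i$ of $\mathcal{C}$ the conflict set $L_{v_i}$ contains exactly $k/(\beta+1)$ lines from each of the $\beta+1$ layers $L_0, L_1, \ldots, L_\beta$, so $|L_{v_i}| = k$. Proposition~\ref{prop:dualpartition} then supplies a coloring of $L$ in which every color class has size at most $k$ and each conflict set $L_{v_i}$ uses at most $a(n,k)+1$ distinct colors. Assigning to every node $w$ of $T$ the multiset of colors received by its $k/(\beta+1)$ lines yields a multi-coloring of $T$; properties~(i) and~(ii) follow immediately from the partition structure and from the color-class bound, respectively.

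The only point that requires a brief argument is property~(iii). Fix a leaf $v_i$ and consider the root-to-$v_i$ path in $T$. The nodes on this path are precisely the ancestors of $v_i$, one at each height $0, 1, \ldots, \beta$, and the lines stored at the ancestor at height $j$ are the $k/(\beta+1)$ copies of the unique element of $L_j$ lying beneath $v_i$. Consequently, the union of lines stored along this path is exactly $L_{v_i}$, so the number of distinct colors appearing on the path equals the number of distinct colors used within $L_{v_i}$, which is at most $a(n,k)+1$. This reindexing between ancestors in $T$ and layers $L_j$ is the one step that needs care, but it is really just bookkeeping; the substance of the proposition is the reformulation itself, which converts the dual shallow-partition problem into a purely combinatorial coloring problem on a complete binary tree that the subsequent lower-bound argument will attack directly.
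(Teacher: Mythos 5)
Your proof is correct and takes the same route as the paper: apply Proposition~\ref{prop:dualpartition} to the constructed line set $L$, transfer the line-coloring to $T$ via the node-to-line correspondence, and observe that the lines stored along a root-leaf path are exactly the conflict set $L_{v_i}$ of the corresponding leaf. You spell out the bookkeeping (one ancestor per height $j$, each owning the $k/(\beta+1)$ copies of the unique $L_j$-line below $v_i$) more explicitly than the paper does, but the argument is identical.
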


\begin{proof}
Properties (i) and (ii) follow immediately from 
Proposition~\ref{prop:dualpartition} and the construction. 
For Property~(iii), observe that the lines encountered
along a root-leaf path are exactly the lines
below the vertex of $\mathcal{C}$ corresponding to the leaf.
\end{proof}

We can now prove the desired lower bound.

\begin{lemma}\label{lem:lowerb}
Let $T$ be a complete binary tree with height $\beta = \log m$ and $2m-1$ nodes, 
and let $k$ be a multiple of $\log m + 1$.
Consider a multi-coloring of $T$ such that
(i) every node is associated with a multiset of $k/(\beta+1)$
colors; and (ii) each color class has at most $2k$ elements.
Then there exists a root leaf-path with $\Omega(\log m/\log\log m)$ 
distinct colors.
\end{lemma}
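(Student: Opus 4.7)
The plan is a double-counting argument that combines two easy identities with a single structural per-color bound. Write $B=k/(\beta+1)$, and for each color $c$ let $S_c\subseteq V(T)$ denote the set of nodes whose multiset contains $c$ (with multiplicity $n_u(c)\ge 1$ at $u\in S_c$), let $\mu(\ell,c)=\sum_{u\in p(\ell)\cap S_c}n_u(c)$ denote the multiplicity of $c$ along the root-leaf path $p(\ell)$ to a leaf $\ell$, let $L(c)=\{\ell:\mu(\ell,c)>0\}$, and let $W_c=\sum_\ell\mu(\ell,c)$. Two simple identities are immediate: since each root-leaf path visits $\beta+1$ nodes each of total multiplicity $B$, summing $\mu(\ell,c)$ over all pairs $(\ell,c)$ gives $\sum_c W_c=mk$; and since each path carries at most $D$ distinct colors, $\sum_c|L(c)|=\sum_\ell|C(\ell)|\le mD$, where $C(\ell)$ is the set of distinct colors along $p(\ell)$.

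The crux is the per-color bound $W_c\le O(B\log\beta)\cdot|L(c)|$, i.e.\ the average multiplicity of $c$ along any covered path is $O(B\log\beta)$. To establish it I would partition $L(c)$ by the topmost $S_c$-node on each path: the \emph{maximal} nodes of $S_c$ (those with no strict ancestor in $S_c$) form an antichain, their subtrees $T_{v^*}$ give a disjoint decomposition $L(c)=\bigsqcup_{v^*}L(v^*)$, and correspondingly $S_c=\bigsqcup_{v^*}S_c^{v^*}$ with $S_c^{v^*}=S_c\cap T_{v^*}$. For a fixed maximal $v^*$, the average of $\mu(\ell,c)$ over $\ell\in L(v^*)$ equals $\sum_{u\in S_c^{v^*}}n_u(c)\cdot 2^{-d_u}$, where $d_u$ is the depth of $u$ below $v^*$. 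At each depth $d$, the sum of $n_u(c)$ over $S_c^{v^*}$-nodes at that depth is bounded both by the per-node cap $2^dB$ (at most $2^d$ nodes at depth $d$ in $T_{v^*}$, each of multiplicity at most $B$) and by the global cap $N_c\le 2k$ on $c$'s total multiplicity, so the depth-$d$ contribution is at most $\min(B,\,N_c\,2^{-d})$. Summing the minimum over $d\ge 0$ is a telescoping estimate that yields $B\log(N_c/B)+O(B)=O\!\bigl(B\log(2k/B)\bigr)=O\!\bigl(B\log(\beta+1)\bigr)$, since $2k/B=2(\beta+1)$. Taking a weighted average over the maximal $v^*$'s proves the bound.

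Plugging the per-color bound into the first identity gives $mk=\sum_cW_c\le O(B\log\beta)\cdot\sum_c|L(c)|\le O(B\log\beta)\cdot mD$, whence
\[
D\;\ge\;\Omega\!\left(\frac{k}{B\log\beta}\right)\;=\;\Omega\!\left(\frac{\beta+1}{\log\beta}\right)\;=\;\Omega\!\left(\frac{\log m}{\log\log m}\right).
\]
The hard part is the level-by-level summation in the per-color bound: both caps have to be combined --- the per-node cap $n_u(c)\le B$ dominates at shallow depths, while the class-size cap $N_c\le 2k$ dominates at deep depths --- and only taking the minimum at each depth before summing extracts the crucial $\log(2k/B)=\log(2(\beta+1))$ factor. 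Using either cap alone produces a much weaker bound with $\log k$ in place of $\log(\beta+1)$, which collapses whenever $k$ is polynomial in $m$; the interplay of the two constraints is exactly what forces the doubly-logarithmic denominator.
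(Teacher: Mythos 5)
Your proof is correct and takes a genuinely different route from the paper's. The paper's proof is greedy and constructive: it cuts the tree into \emph{slices} of logarithmically growing width ($\lceil\log(3i\beta)\rceil$ levels for the $i$th slice), observes that a complete subtree spanning the $i$th slice already contains enough nodes to force at least $i$ distinct colors, and then builds a root-leaf path slice by slice that picks up one fresh color per slice; the bound $\sum_i\lceil\log(3i\beta)\rceil\ge\beta+1$ then yields $b=\Omega(\beta/\log\beta)$ slices. Your argument is an averaging/double-counting one: you count total color mass along paths ($\sum_c W_c=mk$), count color-path incidences ($\sum_c|L(c)|\le mD$), and close the gap with the per-color bound $W_c\le O(B\log\beta)\,|L(c)|$ obtained by decomposing $S_c$ by its maximal elements and summing $\min(B,N_c2^{-d})$ over depths. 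Both proofs ultimately exploit the same tension between the per-node budget $B=k/(\beta+1)$ and the class-size cap $2k$; the ratio $2k/B=2(\beta+1)$ is where the $\log\beta$ factor comes from in each. What yours buys is a cleaner, non-constructive accounting that avoids the slicing bookkeeping and exposes the crossover between the two caps explicitly; what the paper's buys is an explicit path and a more elementary combinatorial picture. One small remark on your write-up: the phrase ``taking a weighted average over the maximal $v^*$'s'' is slightly misleading---since your per-$v^*$ bound $\operatorname{avg}_{v^*}\le O(B\log\beta)$ is uniform over all maximal $v^*$, the aggregate bound $W_c=\sum_{v^*}|L(v^*)|\operatorname{avg}_{v^*}\le O(B\log\beta)\sum_{v^*}|L(v^*)|=O(B\log\beta)\,|L(c)|$ follows directly without needing to invoke the shared budget $N_c$ across $v^*$'s; you only need $N_c\le2k$ as a cap on each individual $a_d^{v^*}$.
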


\begin{proof}
We subdivide the nodes of $T$ into \emph{slices}. 
The first slice consists of the first $\lceil\log(3\beta)\rceil$ 
levels of $T$, 
the second slice consists of the following  $\lceil \log (6 \beta)\rceil$
levels, the third slice has the next $\lceil \log(9 \beta)\rceil$ levels,
and so on. In general, the $i$th slice consists of 
$\lceil \log (3i \beta)\rceil$ consecutive levels of $T$. 

We claim that there
exists a root-leaf path that has at least one distinct color 
for each slice that it crosses, except for the last one. 
To see this, we first consider a complete subtree $T'$ of $T$ that is
has its root in the first level of a slice $i$ and its
leaves in the last level of the same slice. As a complete
binary tree with $\lceil \log (3i \beta)\rceil$ levels, $T'$ has at least  
$3i \beta - 1 \geq 2i\beta + 2i$ nodes. Therefore, our 
multi-coloring needs to assign at least $2(i\beta +i)k/(\beta + 1)$ colors
in $T'$. Since each color class has size at most
$2k$, this requires at least
$i$ \emph{distinct} colors.

We now construct the required root-leaf path slice by slice.
Throughout, we maintain the invariant that after
$i$ slices have been considered, the path contains at least
$i$ distinct colors. This is certainly true at the root.
Now suppose that we have constructed a partial path $Q_{i-1}$ that  
ends at a node $z$ in the last level of the $(i-1)$th slice. If $Q_{i-1}$
contains at least $i$ distinct colors, we arbitrarily extend it
to a path $Q_{i}$ that ends at the bottom of the $i$th slice.
Otherwise, we pick an arbitrary child $z'$ of $z$. 
As noted above, the complete subtree that is rooted at $z'$ and restricted
to the $i$th slice contains at least $i$ distinct colors. Thus,
we can extend $Q_{i-1}$ through $z'$ to a path $Q_i$ that goes to
the bottom of the $i$th slice and that meets at least $i$ distinct colors.
The claim follows. 

It remains to calculate a lower bound for the number of slices $b$.
By construction, we must have
\[
  \sum_{i=1}^b \lceil \log (3i \beta)\rceil \geq \beta + 1.
\]
Now,
\begin{align*}
  \sum_{i=1}^b \lceil \log (3i \beta)\rceil &\leq
  \sum_{i=1}^b \log (4i \beta)\\
  &\leq b (2 + \log b + \log \beta)\\
  & \leq 3b \log\beta,
\end{align*}
since clearly $b \leq \beta$.
Hence, 
\[
b \geq \frac{\beta +1}{3\log\beta} = 
\Omega\Bigl(\frac{\log m}{\log\log m}\Bigr),
\]
as desired.
\end{proof}

We now indicate how to drop the assumption that $k$ is a multiple
of $\beta + 1$. Indeed, suppose that this is not the case,
but $k \geq \beta + 1$.  We first perform the
above construction with $k' \eqdef \lfloor k/(\beta + 1) \rfloor (\beta + 1)$ 
instead of $k$. Note that since $k \geq \beta+1$, we have $k \geq k'$. 
Then we add $k-k'$ suitably perturbed copies of $L_\beta$ 
(the set containing a line in conflict with all vertices of $\mathcal{C}$).
Let $L$ be the resulting set of lines. By Proposition~\ref{prop:dualpartition},
there exists a coloring of $L$ such that each color class has at most
$k \leq 2k'$ elements and such that each conflict set has at most
$a(|L|, k)+1$ distinct colors. The tree $T$ corresponding to $L$
has the same structure as before, but now each non-leaf node except
the leaf is associated with $k'/(\beta + 1)$ colors, while the leaves
have $k - k'$ additional colors. 
This suffices for the argument of Lemma~\ref{lem:lowerb} to go through.

\begin{theorem}
There is a constant $c > 0$ such that the following holds. For every $n$
and $k \in \{\log n, \ldots, n/4\}$, there 
exists a planar $n$-point set $P$
such that the crossing number for any $k$-partition of $P$ is at least
$c\log (n/k) / \log\log (n/k)$. Thus,
\[a(n,k) = \Omega\Biggl(\frac{\log (n/k)}{\log\log (n/k)}\Biggr).\]
\end{theorem}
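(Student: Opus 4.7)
The plan is to string together the machinery already set up: given the desired sizes $n$ and $k$, I would choose the parameter $m$ of the construction (a power of $2$) so that the resulting line set has the right cardinality, then apply Proposition~\ref{prop:tree} (extended as indicated in the paragraph right above the theorem, in case $k$ is not a multiple of $\beta+1$) together with Lemma~\ref{lem:lowerb} to the dual. Concretely, I would take the largest $\beta$ for which $(2\cdot 2^\beta-1)k/(\beta+1)\le n$; a direct check shows that this guarantees $m=2^\beta$ satisfies $n'\eqdef (2m-1)k/(\beta+1)=\Theta(n)$, and that $\beta+1\le k$ follows from the hypothesis $k\ge\log n$ (while $k\le n/4$ keeps $m\ge 2$, so the construction is non-degenerate).

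Next I would turn the construction into an $n$-point (equivalently, $n$-line in the dual) instance. Starting from the line set $L$ of size $n'$ obtained from the construction (with the padding of $k-k'$ extra copies of $L_\beta$ described in the paragraph preceding the theorem to handle divisibility), I would add $n-n'$ further ``dummy'' lines in the dual plane, placed entirely above the convex chain $\mathcal{C}$, so that no dummy line belongs to any conflict set $L_{v_i}$. Dualizing yields an $n$-point set $P$ in the primal plane whose $k$-shallow lines $v_i^*$ still have conflict sets of size exactly $k$, formed entirely by the original lines. This is the key technical observation I would verify carefully, and I expect it to be the main delicate step: one has to make sure that the padding neither spoils the conflict structure at the vertices $v_i$ nor enlarges the color classes beyond the $2k$ allowed by Lemma~\ref{lem:lowerb}.

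With $P$ fixed, suppose there is a $k$-partition of $P$ with crossing number $a(n,k)$. Applying Proposition~\ref{prop:dualpartition} and then Proposition~\ref{prop:tree} to the restriction of the induced coloring to $L$, I obtain a multi-coloring of the complete binary tree $T$ of height $\beta$ whose nodes carry multisets of $k'/(\beta+1)$ colors (with at most $k-k'$ further colors on the leaves), whose color classes have size at most $k\le 2k$, and along each root-leaf path of which at most $a(n,k)+1$ distinct colors appear. Lemma~\ref{lem:lowerb} then yields some root-leaf path using $\Omega(\log m/\log\log m)$ distinct colors, so
\[
  a(n,k)+1 \;=\; \Omega\!\left(\frac{\log m}{\log\log m}\right).
\]

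Finally I would translate the bound back to $n$ and $k$. Since $n=\Theta(mk/\log m)$ by the choice of $m$, one has $n/k=\Theta(m/\log m)$, so taking logarithms gives $\log m=\Theta(\log(n/k))$ and hence $\log\log m=\Theta(\log\log(n/k))$. Substituting produces the claimed bound $a(n,k)=\Omega(\log(n/k)/\log\log(n/k))$, completing the proof. The bulk of the work is conceptual rather than computational; the only calculations are the routine verification that the chosen $m$ fits in the range permitted by the hypotheses on $k$ and the asymptotic identity $\log m=\Theta(\log(n/k))$.
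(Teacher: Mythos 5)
Your proposal follows the paper's proof step by step: choose the largest $\beta$ making $(2m-1)k/(\beta+1)$ fit under a constant multiple of $n$, apply the divisibility fix and then Propositions~\ref{prop:dualpartition} and \ref{prop:tree} together with Lemma~\ref{lem:lowerb}, pad up to exactly $n$ points with points that lie in no $k$-shallow halfplane (you describe the dual picture---dummy lines above $\mathcal{C}$---which is the same device), and finally note $\log m = \Theta(\log(n/k))$. The only cosmetic differences are that the paper sets the threshold at $n/2$ rather than $n$ (giving itself more explicit slack for the $k-k'$ additive term and for the padding) and spells out the inequalities $n/8 \le n' \le n$ and $m \ge n/(9k)$ that you summarize as ``$\Theta$'' statements; both are minor presentational choices rather than a different route.
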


\begin{proof}
Let $\beta \in \N$ be maximum with $(2^{\beta+1}-1)/(\beta+1) \leq n/2k$. 
Set $m \eqdef 2^\beta$ and $k' \eqdef \lfloor k/(\beta + 1) \rfloor (\beta + 1)$. 

From Propositions~\ref{prop:dualpartition} and \ref{prop:tree} 
and Lemma~\ref{lem:lowerb}, it follows that by taking the dual
we obtain a set $P'$ of 
$n' \eqdef (2m-1)k'/(\beta +1) + k - k'$ points such that
any $k$-partition of $P'$ has crossing number at least
$c' \log m / \log\log m$, for some constant $c' > 0$.

First note that $\beta < \log n$ and $k \geq \beta + 1$.
Hence, $k' \leq k \leq 2k'$ and $k - k' \leq \log n$. Thus, we can conclude that
\[
n' =  \frac{2m-1}{\log m + 1}k' + k - k' \leq \frac{n}{2} + \log n \leq n.
\]
and
\[
n' =  \frac{2m-1}{\log m + 1}k' + k - k' \geq \frac{n}{4k} \cdot \frac{k}{2} 
  = \frac{n}{8}.
\]
Thus, by adding at most $7n/8$ points that are contained in no $k$-shallow
halfplane, we can obtain from $P'$ a point set $P$ with $n$ points
and crossing number at least $c \log m / \log\log m$.
Finally, observe that 
\[
m \geq \frac{n'}{k'} - k \geq \frac{n}{9k}, 
\]
so $P$ also has crossing number at least
$c \cdot \frac{\log(n/k)}{\log\log(n/k)}$, for some $c > 0$. 
The result follows.
\end{proof}

Note that our construction also implies a similar lower bound in $\R^3$
by embedding the plane into three-dimensional space and perturbing
the points slightly. This provides an alternative proof of the result
by Afshani and Chan~\cite{AfshaniCh09}.

\section{Conclusion and Open Problems}

We have given a simple construction that give a lower bound
of $\Omega\Bigl(\frac{\log (n/k)}{\log\log (n/k)}\Bigr)$ for
the crossing number of
any shallow partition of a planar point set. \matousek's 
result gives an upper bound of $O(\log (n/k))$. Thus, there still
remains a factor of $\log\log(n/k)$ to be settled. Can we show that
\matousek's analysis is tight? Or, perhaps more interestingly, can
we construct shallow partitions with crossing number
$O\Bigl(\frac{\log (n/k)}{\log\log (n/k)}\Bigr)$?

\section*{Acknowledgments}

{\small
We would like to thank Nabil Mustafa for 
drawing our attention to shallow partitions
and for enlightening conversations.
We would also like to thank Timothy Chan
for answering our questions about shallow
partitions.}

\bibliographystyle{abbrv}
\bibliography{shallow}

\newcommand{\SortNoop}[1]{}
\begin{thebibliography}{10}

\bibitem{Afshani10}
P.~Afshani.
\newblock Unpublished manuscript.

\bibitem{AfshaniCh09}
P.~Afshani and T.~M. Chan.
\newblock Optimal halfspace range reporting in three dimensions.
\newblock In {\em Proc. 20th Annu. ACM-SIAM Sympos. Discrete Algorithms
  (SODA)}, pages 180--186, 2009.

\bibitem{AgarwalEr99}
P.~K. Agarwal and J.~Erickson.
\newblock Geometric range searching and its relatives.
\newblock In {\em Advances in discrete and computational geometry}, volume 223
  of {\em Contemp. Math.}, pages 1--56. 1999.

\bibitem{AroraHaKa10}
S.~Arora, E.~Hazan, and S.~Kale.
\newblock The multiplicative weights update method: a meta-algorithm and its
  applications.
\newblock \emph{To appear}.

\bibitem{deBergChvKrOv08}
M.~{\SortNoop{Berg}}de~Berg, O.~Cheong, M.~van Kreveld, and M.~Overmars.
\newblock {\em Computational geometry: algorithms and applications}.
\newblock Springer-Verlag, Berlin, third edition, 2008.

\bibitem{Chan10}
T.~M. Chan.
\newblock Optimal partition trees.
\newblock In {\em Proc. 26th Annu. ACM Sympos. Comput. Geom. (SoCG)}, pages
  1--10. 2010.

\bibitem{Chazelle00}
B.~Chazelle.
\newblock {\em The discrepancy method: randomness and complexity}.
\newblock Cambridge University Press, Cambridge, 2000.

\bibitem{HarPeledSh11}
S.~Har-Peled and M.~Sharir.
\newblock Relative {$(p,\epsilon)$}-approximations in geometry.
\newblock {\em Discrete Comput. Geom.}, 45(3):462--496, 2011.

\bibitem{Matousek92b}
J.~Matou{\v{s}}ek.
\newblock Efficient partition trees.
\newblock {\em Discrete Comput. Geom.}, 8(3):315--334, 1992.

\bibitem{Matousek92}
J.~Matou{\v{s}}ek.
\newblock Reporting points in halfspaces.
\newblock {\em Comput. Geom. Theory Appl.}, 2(3):169--186, 1992.

\bibitem{Mulmuley94}
K.~Mulmuley.
\newblock {\em Computational Geometry: An Introduction through Randomized
  Algorithms}.
\newblock Prentice-Hall, Englewood Cliffs, 1994.

\bibitem{SharirAg95}
M.~Sharir and P.~K. Agarwal.
\newblock {\em {D}avenport-{S}chinzel sequences and their geometric
  applications}.
\newblock Cambridge University Press, New York, NY, USA, 1995.

\bibitem{Welzl92}
E.~Welzl.
\newblock On spanning trees with low crossing numbers.
\newblock In {\em Data structures and efficient algorithms}, volume 594 of {\em
  Lecture Notes in Comput. Sci.}, pages 233--249. Springer, Berlin, 1992.

\end{thebibliography}

\end{document}